\newcommand{\crishref}[2] {\href{#1}{\color{blue}{#2}}}
\newtheorem{thm}{Theorem}
\newtheorem{cor}[thm]{Corollary}
\newtheorem{prop}[thm]{Proposition}
\newtheorem{defn}[thm]{Definition}
\DeclareMathOperator{\dom}{dom}
\newcommand{\Z}{\mathbb{Z}^{+}}
\begin{document}
\setcounter{page}{1001}
\issue{XXI~(2014)}

\sloppy

\title{Universality and Almost Decidability}
\author{Cristian S. Calude$^{1}$, Damien Desfontaines$^{2}$\\[2ex]
$^{1}$Department of Computer Science\\
University of Auckland,
Private Bag 92019, Auckland, New Zealand\\
\url{www.cs.auckland.ac.nz/~cristian}\\
$^{2}$\'{E}cole Normale Sup\'erieure,
45 rue d'Ulm, 75005 Paris, France\\
\url{desfontain.es/serious.html}
}

\maketitle

\runninghead{C. S. Calude, D. Desfontaines}{Universality and Almost Decidability}

\begin{abstract}
We present and study new definitions of universal and programmable universal unary functions and consider a new simplicity criterion:
almost decidability of the halting set.
 A  set of positive integers $S$ is almost decidable if  there exists a  decidable and generic (i.e.\ a  set of natural density one)      set  whose intersection with $S$ is decidable. 
Every decidable set is almost decidable, but the converse implication is false.   We prove the existence of infinitely many universal functions whose halting sets are 
generic (negligible, i.e.\  have density zero) and (not) almost decidable.  One result---namely, the existence of  infinitely many universal functions whose halting sets are 
generic (negligible) and not almost decidable---solves  an open problem in \cite{HM}. We conclude with some open problems.
\end{abstract}
 
 \begin{keywords}Universal function, halting set,  density, generic and negligible sets,  almost decidable set\end{keywords}

 \section{Universal Turing Machines and Functions}
 
 The first universal Turing machine was constructed by Turing  \cite{AT,ATc}. In Turing's words:
\begin{quote}\it
\dots a single special machine of that type can be made to do the work of all. It could in fact be made to work as a model of any other machine. The special machine may be called the universal machine.
\end{quote}

Shannon \cite{CS} proved that two symbols were sufficient for constructing a universal Turing machine providing enough states can be used.  According to Margenstern~\cite{Maurice-2010}: ``Claude Shannon raised the problem of what is now called the {\em descriptional complexity} of Turing machines: how many states and letters are needed in order to get universal machines?''
Notable universal Turing machines  include the machines constructed by
 Minsky    (7-state 4-symbol) \cite{MM},   Rogozhin (4-state 6-symbol) \cite{YR}, 
 Neary--Woods (5-state 5-symbol) \cite{NW}.
Herken's book \cite{Herken} celebrates the first 50 years of universality.  Woods and  Neary  presents a  survey  in \cite{WN}; 
Margenstern's paper ~\cite[p.\ 30--31]{Maurice-2010} presents also a time line of the main results. 

Roughly speaking, a universal   machine is a  machine capable of simulating any other   machine. There are a few  definitions of universality, the most important being {\em universality in Turing's sense} and {\em programmable universality} in the sense of Algorithmic Information Theory~\cite{Ca,DH}.

In the following we denote by $\mathbb{Z}^{+}$ the set of positive integers $\left\{ 1,2,\ldots\right\} $,
and $\overline{\mathbb{Z}^{+}}=\mathbb{Z}^{+}\cup\left\{ \infty\right\}$. The cardinality of a set $S$ is denoted by $\#S$. The domain of a partial function $F\colon \Z \longrightarrow\overline{\Z}$ is $\dom(F)=\{x \in \mathbb{Z}^{+}\mid F(x)\not=\infty\}$.
We assume familiarity with the basics of computability theory~\cite{Cooper-2004,YM2010}.

\medskip

We define now universality for unary functions.

A partially computable function $U \colon\mathbb{Z}^{+}\longrightarrow\overline{\mathbb{Z}^{+}}$ is called {\em (Turing) universal} if there exists a  computable function $C_U \colon\mathbb{Z}^{+}\times\mathbb{Z}^{+}\longrightarrow\mathbb{Z}^{+}$ such that for any partially computable function $F\colon\mathbb{Z}^{+}\longrightarrow\overline{\mathbb{Z}^{+}}$ there exists an integer $g_{U,F}$ (called a {\em G\"{o}del number} of $F$ for $U$) such that  for all $ x\in\mathbb{Z}^{+}$ we have: $U \left(C_U\left(g_{U,F},x\right)\right)=F\left(x\right)$. 

Following  \cite{YM2012,CD} we say that a partially computable function $U \colon\mathbb{Z}^{+}\longrightarrow\overline{\mathbb{Z}^{+}}$ is  {\em programmable  universal} if for every partially computable function $F\colon\mathbb{Z}^{+}\longrightarrow\overline{\mathbb{Z}^{+}}$ there exists a constant $k_{U,F}$ such that for every $x\in\Z$  there exists $y\le k_{U,F}\cdot x$ with $U (y) = F (x).$\footnote{For  the programming-oriented reader we note that the property  ``programmable  universal" corresponds to being able to write a compiler.}

\begin{thm}
\label{mainprogramuniv}
A partially computable function $U \colon\mathbb{Z}^{+}\longrightarrow\overline{\mathbb{Z}^{+}}$ is  programmable  universal iff there exists a partially computable function $C_U \colon\mathbb{Z}^{+}\times\mathbb{Z}^{+}\longrightarrow\overline{\mathbb{Z}^{+}}$ such that for any partially computable function $F\colon\mathbb{Z}^{+}\longrightarrow\overline{\mathbb{Z}^{+}}$ there exist two integers $g_{U,F}, c_{U,F}$ such that  for all $ x\in\mathbb{Z}^{+}$ we have 
\begin{equation}
\label{suniv1}
U \left(C_U\left(g_{U,F},x\right)\right)=F\left(x\right)
\end{equation}
and 
\begin{equation}
\label{suniv2}C_U\left(g_{U,F},x\right)\leq c_{U,F}\cdot x.
\end{equation}
\end{thm}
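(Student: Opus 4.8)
The plan is to prove the two implications separately; the $(\Leftarrow)$ direction is routine, while the $(\Rightarrow)$ direction carries the single real idea. For $(\Leftarrow)$, suppose such a $C_U$ exists and let $F\colon\Z\to\overline{\Z}$ be partially computable. I would simply take $k_{U,F}:=c_{U,F}$: for every $x\in\Z$ the integer $y:=C_U(g_{U,F},x)$ satisfies $y\le c_{U,F}\cdot x=k_{U,F}\cdot x$ by (\ref{suniv2}) and $U(y)=F(x)$ by (\ref{suniv1}) (in particular $U(y)=\infty$ when $F(x)=\infty$), so $U$ is programmable universal.

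For $(\Rightarrow)$, fix an acceptable numbering $(\varphi_e)_{e\in\Z}$ of the partially computable functions $\Z\to\overline{\Z}$ and a computable bijection $\langle\cdot,\cdot\rangle\colon\Z\times\Z\to\Z$. I would define $C_U$ on input $(\langle e,k\rangle,x)$ by the dovetailed search that runs $\varphi_e(x)$ together with the computations $U(1),\dots,U(k\cdot x)$ in parallel and, as soon as $\varphi_e(x)$ has converged to some value $v$ and some $U(y)$ with $1\le y\le k\cdot x$ has converged to the same $v$, outputs the first such $y$ found; if this never happens, $C_U(\langle e,k\rangle,x)$ is left undefined. This $C_U$ is partially computable. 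Given a partially computable $F$, I then pick $e$ with $\varphi_e=F$, invoke programmable universality of $U$ to pick a constant $k$ such that for every $x$ there is some $y\le k\cdot x$ with $U(y)=F(x)$, and set $g_{U,F}:=\langle e,k\rangle$ and $c_{U,F}:=k$. When $x\in\dom(F)$ the search succeeds and outputs some $y\le k\cdot x$ with $U(y)=F(x)$, so (\ref{suniv1}) and (\ref{suniv2}) both hold; when $x\notin\dom(F)$ the search diverges, so $C_U(g_{U,F},x)$ is undefined, both sides of (\ref{suniv1}) equal $\infty$, and (\ref{suniv2}) imposes nothing.

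The main obstacle is not the dovetailing itself but obtaining a \emph{single} partially computable $C_U$ that works for all $F$: a programmable-universality constant for $F$ need not be computable from an index of $F$, so $C_U$ cannot on its own launch an unbounded search for a short $U$-program. The device that resolves this is to use the freedom in choosing the G\"odel number and fold such a constant $k$ into $g_{U,F}$; once the finite search range $\{1,\dots,k\cdot x\}$ is pinned down, correctness on $\dom(F)$ is literally the programmable-universality hypothesis, and the behaviour off $\dom(F)$ is exactly the compiler behaviour hinted at in the footnote: the translated program diverges precisely when the source does, and then there is no output to bound.
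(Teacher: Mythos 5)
Your proof is correct and takes essentially the same route as the paper's: the key device of folding the programmable-universality constant $k$ into the G\"odel number via a pairing function and defining $C_U(z,x)$ as an effective bounded search for some $y\le k\cdot x$ with $U(y)=F(x)$ is exactly the paper's construction, which merely routes the simulation of $F$ through an explicitly built auxiliary pair $(V,C_V)$ rather than directly through an acceptable numbering, and handles the off-domain case at the same level of rigour as you do. The easy converse direction (taking $k_{U,F}:=c_{U,F}$ and $y:=C_U(g_{U,F},x)$) coincides with the paper's as well.
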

\begin{proof}First we construct a partially computable function $V\colon\mathbb{Z}^{+}\longrightarrow\overline{\mathbb{Z}^{+}}$ and a partially computable function $C_V \colon\mathbb{Z}^{+}\times\mathbb{Z}^{+}\longrightarrow\overline{\mathbb{Z}^{+}}$   such that  for every partially computable function $F$,  (\ref{suniv1}) and (\ref{suniv2})  are satisfied.  Indeed, the classical Enumeration Theorem~\cite{Cooper-2004} shows the existence of a partial 
computable function $\Gamma \colon\mathbb{Z}^{+}\times \Z \longrightarrow\overline{\mathbb{Z}^{+}}$
 such that for every
 partial computable function $F\colon \Z \longrightarrow\overline{\Z}$ there exists $e \in \mathbb{Z}^{+}$
such that $F(x) = \Gamma(e, x)$, for all $x\in \mathbb{Z}^{+}$. Consider the computable function
$f \colon \Z\times \Z \longrightarrow\Z$  such that the binary expansion of $f(e, x)$
 is obtained by prefixing the binary expansion of $x$ with the binary expansion of $2e+1$. Then 
$\alpha$ is injective because if $e_{1}e_{2}\dots e_{n}$ and $x_{1}x_{2}\dots x_{m}$ are the binary expansions of $e$ and $x$, respectively, then $e_{1}0e_{2}0\dots e_{n}1x_{1}x_{2}\dots x_{m}$ is the binary expansion of $f(e,x)$ from which we can uniquely recover $e$ and $x$. If $f_1, f_2 \colon \Z \longrightarrow \Z$ are  computable 
partial inverses of $f$, i.e.\ $ f(f_1(x), f_2(x)) = x$, for all $x\in f(\Z\times\Z)$,
 then the function $V(x) = \Gamma(f_1(x), f_2(x))$ has  (\ref{suniv1}) and (\ref{suniv2}) for $C_{V}=f$.\footnote{This construction
 suggests that the function $C_{U}$ may be taken to be computable.}

Let $U$ be programmable  universal, that is, for every partially computable function $F\colon\mathbb{Z}^{+}\longrightarrow\overline{\mathbb{Z}^{+}}$ there exists a constant $k_{U,F}$ such that for every $x\in\Z$  there exists  $y\le k_{U,F}\cdot x$ with $U (y) = F (x).$  We shall use $V$ to prove that $U$ satisfies the condition in the statement of the theorem.

Let $b\colon \Z\times\Z \longrightarrow\Z$ be a computable bijection and $b_1, b_2$ the components of its inverse. 

We define the partially computable function $C_U$ as follows.  We consider first the set $ S(z,x)=\{y \in\dom(U)\mid y \le b_1(z)\cdot x, U(y)=V(C_V(b_2(z),x))\}$ and then we define  $C_U(z,x) $ to  be the first element of $S(z,x)$ according to some computable enumeration of $\dom(U)$. Formally, let $E$ be a computable one-one enumeration of $\dom(U)$ and  define

\[C_U(z,x) = E\left(\inf\{y\:|\:E(y) \le b_1(z)\cdot x \text{ and }U(E(y))=V(C_V(b_2(z),x))\}\right).\]

We now prove that $U$ satisfies the condition in the statement of the theorem via $C_U$. To this aim let $F$ be a partially computable function
 and let $g_{V,F}, c_{V,F}$ be the constants associated to $V$ and $F$. \\[-2ex]

Put   $g_{U,F} = b(k_{U,F}, g_{V,F})$ and $c_{U,F}= k_{U,F}$.  \\[-2ex]

We have:
\begin{eqnarray*}
    C_{U}(g_{U,F},x) & = & E\left(\inf\{y\:|\: E(y) \le b_1(g_{U,F})\cdot x
\text{ and }U(E(y))=V(C_V(b_2(g_{U,F}),x))\}\right)\\
                     & = & E\left(\inf\{y\:|\: E(y)\le k_{U,F}\cdot x \text{ and } U(E(y))=V(C_V(g_{V,F},x))\}\right)\\
                     & = & E\left(\inf\{y\:|\: E(y) \le k_{U,F}\cdot x \text{ and } U(E(y))=F(x)\}\right)\\
                     & \le & k_{U,F}\cdot x=c_{U,F}\cdot x,
\end{eqnarray*}
and $U(C_{U}(g_{U,F},x))=F(x)$.

\if01
We have:
\begin{eqnarray*}
    C_{U}(g_{U,F},x) & = & E(\inf\{E(y) \le b_1(g_{U,F})\cdot x \text{ and } U(E(y))=V(C_V(b_2(g_{U,F}),x))\}\\
                     & = & E(\inf\{y \le k_{U,F}\cdot x \text{ and } U(E(y))=V(C_V(k_{U,F},x))\})\\
                     & = & E(\inf\{y \le k_{U,F}\cdot x \text{ and } U(E(y))=F(x)\}),\\
& \le & k_{U,F}=c_{U,F},
\end{eqnarray*}
and $U(C_{U}(g_{U,F},x))=F(x)$.
\fi

Conversely, if $V$ satisfies (\ref{suniv1}) and (\ref{suniv2})  with the partially computable function $C_{V}$, then $V$ is programmable  universal: given  a partially computable function $F$ and $x\in\Z$, $y=C_{V}(g_{V,F},x)$ and $k_{V,F}=c_{V,F}$.
\end{proof}

Universal and programmable  universal functions exist and can be effectively constructed. Every programmable  universal function is  universal, but the converse implication is false.

\section{The Halting Set and Almost Decidability}
Interesting classes of   Turing machines have decidable halting sets: for example, Turing machines with two letters and two states~\cite{Maurice-2010}. In contrast, the
 most (in)famous result in computability theory is that {\em the halting set  $\text{Halt}(U)=\dom(U)$ of  a universal function $U$ is undecidable.}

However, the halting set
$\text{Halt}(U)$ is  computably enumerable (see~\cite{Cooper-2004,YM2010}).  
How ``undecidable'' is  $\text{Halt}(U)$?   To answer this question we formalise the following notion: a set $S$ is ``almost decidable'' if there exists a ``large'' decidable set  whose intersection with  $S$ is also decidable. In other words, the undecidability of $S$ can be located to a ``small'' set. 

To define ``large''  sets we can employ measure theoretical  or topological tools adapted to the set of positive integers
(see
\cite{Ca}). 
In what follows we will
work with the {\it (natural) density}  on $\mathcal{P}\left(\mathbb{Z}^{+}\right)$. Its motivation is the following.  If a positive integer is ``randomly'' selected from the set
$\{1,2,\dots ,N\}$, then the probability that it belongs to a given set $A \subset \mathbb{Z}^{+}$ is 

\[
p_{N}\left(A\right)=\frac{\#\left(\left\{ 1,\ldots,N\right\} \cap A\right)}{N}\raisebox{.5ex}.
\]

If $\lim_{N\longrightarrow\infty}p_{N}\left(A\right)$ exists, then 
the set $A\subset \mathbb{Z}^{+}$  has {\em density}:

\[
d\left(A\right)=\lim_{N\longrightarrow\infty}\frac{\#\left\{1\leq  x\leq N\:|\: x\in A\right\} }{N}\raisebox{.5ex}.
\]


\begin{defn}A set is {\em generic} if it has density one; a set of density zero is called {\em negligible}.  A set $S\subset \Z$ is {\em almost decidable} if  there exists a generic decidable set $R\subset \Z$  such that $R \cap S$ is decidable.
\end{defn}

Every decidable set is almost decidable, but, as we shall see below, there exist almost decidable sets which are not decidable. A  set which is not almost decidable contains no generic decidable subset; of course, this result is non-trivial if the set itself is generic.

\begin{thm}[\cite{HM}, Theorem 1.1] 
\label{hautm} There exists a universal Turing machine  whose halting set is negligible and almost  decidable (in polynomial time).
\end{thm}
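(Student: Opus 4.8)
The plan is to produce such a machine by a simple \emph{padding} argument: start from any Turing universal function and re-encode its programs inside a very sparse, polynomial-time decidable set of integers. Sparseness forces the halting set to be negligible, the complement of that sparse set serves as the generic decidable ``large'' set witnessing almost decidability, and universality survives the padding because Turing universality constrains $U$ only through the \emph{existence} of some computable coding function.

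Concretely, fix a Turing universal partially computable $U_{0}\colon\Z\longrightarrow\overline{\Z}$ together with its computable coding function $C_{U_{0}}$, and let $P=\{2^{k}\mid k\in\Z\}$. The set $P$ is decidable in polynomial time and has density zero, since $\#(P\cap\{1,\dots,N\})=\lfloor\log_{2}N\rfloor$. Define $U\colon\Z\longrightarrow\overline{\Z}$ by $U(2^{k})=U_{0}(k)$ for $k\in\Z$ and $U(n)=\infty$ for every $n$ that is not a power of two; this $U$ is partially computable. First I would verify that $U$ is Turing universal: put $C_{U}(g,x)=2^{\,C_{U_{0}}(g,x)}$, a total computable function, and for a partially computable $F$ set $g_{U,F}=g_{U_{0},F}$; then for every $x\in\Z$,
\[
U\bigl(C_{U}(g_{U,F},x)\bigr)=U\bigl(2^{\,C_{U_{0}}(g_{U_{0},F},x)}\bigr)=U_{0}\bigl(C_{U_{0}}(g_{U_{0},F},x)\bigr)=F(x).
\]
By the Church--Turing thesis $U$ is computed by a Turing machine, and that machine is universal in the sense of the first definition of universality above, via $C_{U}$.

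Next I would read off the two required properties. For negligibility, $\text{Halt}(U)=\dom(U)=\{2^{k}\mid k\in\text{Halt}(U_{0})\}\subseteq P$, so $0\le p_{N}(\text{Halt}(U))\le p_{N}(P)\to 0$ and $\text{Halt}(U)$ has density zero. For almost decidability, take $R=\Z\setminus P$. Then $R$ is decidable in polynomial time, and $d(R)=1-d(P)=1$, so $R$ is generic; moreover $R\cap\text{Halt}(U)=\varnothing$ because $\text{Halt}(U)\subseteq P$, hence $R\cap\text{Halt}(U)$ is trivially decidable in polynomial time. This exhibits $U$ as a universal Turing machine whose halting set is negligible and almost decidable in polynomial time.

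There is no genuinely hard step here; the only point that needs attention is that padding the domain must not destroy universality, and this is guaranteed because Turing universality of $U$ requires only \emph{some} computable coding function, which we may freely replace by $2^{C_{U_{0}}(\cdot,\cdot)}$. The exponential blow-up of the codes is harmless for Turing universality, although it would be fatal for \emph{programmable} universality -- which is precisely why the theorem asserts only the weaker notion. Any polynomial-time decidable set of density zero admitting a polynomial-time computable injection from $\Z$ works equally well in place of $P$; for instance the perfect squares, with $C_{U}(g,x)=C_{U_{0}}(g,x)^{2}$.
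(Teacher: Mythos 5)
Your construction is correct as far as the paper's definitions go: $C_U(g,x)=2^{C_{U_0}(g,x)}$ is total computable, so $U$ is Turing universal; $\text{Halt}(U)\subseteq P$ gives negligibility; and $R=\Z\setminus P$ is a polynomial-time decidable generic set with $R\cap\text{Halt}(U)=\emptyset$, so $\text{Halt}(U)$ is almost decidable in polynomial time. But note that the paper does not prove this statement at all --- it quotes it from Hamkins--Miasnikov, and their Theorem 1.1 is a much deeper fact about the \emph{standard} Turing machine model (single semi-infinite tape, single halt state, binary alphabet): for that natural machine, the genuinely arising halting set is decidable in polynomial time on a set of density one, because almost every program either falls off the end of the tape or enters a detectable repetition before reading much of its input. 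Your padding argument instead manufactures an artificial universal function whose halting set is squeezed into a sparse decidable set by re-encoding, so the almost-decidability witness is degenerate (the generic set simply misses $\text{Halt}(U)$ entirely); this proves the bare existence claim in the paper's unary-function framework, but it loses exactly the content that makes the Hamkins--Miasnikov theorem interesting, namely that a \emph{natural} universal machine, with no sparsification of its domain, already has this property. (Your style of padding is essentially what the paper itself uses later, e.g.\ in Proposition~\ref{genericnad}, but there in the opposite direction, to make the halting set generic.) Also, your side remark that the theorem asserts only Turing universality ``because'' exponential blow-up would kill programmable universality misreads the source: the restriction to Turing universality in [HM] is not an artifact of any padding, since their machine is not padded at all.
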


A single semi-infinite tape, single halt state, binary alphabet universal Turing machine satisfies Theorem~\ref{hautm}; other examples are provided in \cite{HM}.

Negligibility  reduces to some extent the power of almost decidability in Theorem~\ref{hautm}.
This deficiency is overcome in the next result: the price  paid is in the redundancy of the universal function.

\begin{prop}
\label{genericnad}There exist infinitely many  universal functions  whose halting sets are generic and almost decidable (in polynomial time).
\end{prop}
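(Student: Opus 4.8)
The plan is to take an arbitrary universal function and ``dilute'' it onto a very thin decidable subset of $\Z$, filling the rest of $\Z$ with a trivial, everywhere-halting behaviour. This forces the halting set to contain a decidable set of density one, after which almost decidability is automatic, because any set containing a decidable generic set is almost decidable (intersect it with that very set). Concretely, fix a universal function $U_1$ (such functions exist and are effectively constructible, as recalled above) together with its total computable simulation function $C_{U_1}$. Put $D=\{2^{n}\mid n\ge 1\}$: this set is decidable and negligible (its counting function up to $N$ is $\lfloor\log_{2}N\rfloor$, so $d(D)=0$), its complement $\overline{D}=\Z\setminus D$ is decidable and generic, membership in $D$ and the bijection $\phi\colon D\to\Z$ given by $\phi(2^{n})=n$ are computable in polynomial time, and $\phi^{-1}(n)=2^{n}$ is total computable. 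For each $m\in\Z$ define the partially computable function $U^{(m)}$ by $U^{(m)}(y)=U_1(\phi(y))$ if $y\in D$ and $U^{(m)}(y)=m$ if $y\notin D$.

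The verification splits into four short steps. (i) $U^{(m)}$ \emph{is universal}: keep the G\"odel numbers of $U_1$ and set $C_{U^{(m)}}(z,x)=\phi^{-1}(C_{U_1}(z,x))$, which is total computable and always takes values in $D$; hence $U^{(m)}(C_{U^{(m)}}(z,x))=U_1(C_{U_1}(z,x))$ for all $z,x$, so the identity $U^{(m)}(C_{U^{(m)}}(g_{U_1,F},x))=F(x)$ is inherited from the universality of $U_1$ (including the case where $F(x)$ is undefined). (ii) \emph{Halting set}: since $U^{(m)}$ is total on $\overline{D}$ and on $D$ it halts exactly where $U_1\circ\phi$ does, $\text{Halt}(U^{(m)})=\overline{D}\cup\phi^{-1}(\text{Halt}(U_1))$. (iii) \emph{Genericity}: $\text{Halt}(U^{(m)})\supseteq\overline{D}$ and $d(\overline{D})=1$. (iv) \emph{Almost decidability in polynomial time}: take $R=\overline{D}$, which is generic and decidable in polynomial time; since $\phi^{-1}(\text{Halt}(U_1))\subseteq D$, we get $R\cap\text{Halt}(U^{(m)})=\overline{D}$, which is decidable in polynomial time.

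Finally, the functions $U^{(m)}$ for $m\in\Z$ are pairwise distinct, since $1\in\overline{D}$ and $U^{(m)}(1)=m$; hence there are infinitely many universal functions with the required properties. The only points that need care are the re-indexing in step (i)---one must check that $C_{U^{(m)}}$ is \emph{total} computable and that its values never leave $D$, so that the padding on $\overline{D}$ cannot interfere with any simulation---and the elementary observation behind step (iv). That observation is exactly why genericity of the halting set trivialises almost decidability here, in contrast with the negligible situation of Theorem~\ref{hautm}, where the undecidability cannot be absorbed into a generic subset and real work is required; the ``price'' in the present proposition is precisely the redundancy introduced by the trivial behaviour of $U^{(m)}$ on the large set $\overline{D}$. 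One may, if desired, run the same dilution starting from the machine of Theorem~\ref{hautm} and carry its witness $R_{0}$ along (replacing $R=\overline{D}$ by $R=\overline{D}\cup\phi^{-1}(R_{0})$), but this refinement is not needed for the statement.
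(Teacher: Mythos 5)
Your proof is correct and follows essentially the same route as the paper: copy a universal function onto a negligible decidable set (powers of two instead of the paper's squares) and make the function halt everywhere on the generic complement, which then serves directly as the polynomial-time witness for almost decidability. The only cosmetic difference is that you obtain infinitely many examples by varying the constant value $m$ on $\overline{D}$, while the paper varies the underlying universal function $V$.
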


\begin{proof}Let $V$ be a universal function and define $U$ by the formula:
\[U(x)  = \left\{ \begin{array}{ll}
V(y), & \mbox{\rm if $x=y^{2}$, for some $y\in\Z$}, \\
0, & \mbox{\rm otherwise} \,.
  \end{array} \right.\]
Clearly, $U$ is universal, $\text{Halt}(U)$ is generic, the set $S=\{ y \in\Z\mid 
y\not= x^{2} \text  {   for every } x\in\Z\}$ is
generic and decidable (in polynomial time) and $S\cap \text{Halt}(U)$ is generic and decidable (in polynomial time).
\end{proof}

\begin{cor} There exist  infinitely many almost decidable but not decidable sets.
\end{cor}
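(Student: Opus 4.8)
The plan is to leverage Proposition~\ref{genericnad} directly. That proposition produces infinitely many universal functions $U$ whose halting sets $\text{Halt}(U)$ are generic and almost decidable; I would observe that each such $\text{Halt}(U)$ is undecidable (since $U$ is universal, its halting set is the domain of a universal function, which is the classical halting problem relative to $U$ and hence undecidable), while being almost decidable by construction. This immediately yields infinitely many almost decidable sets that are not decidable, since the construction in Proposition~\ref{genericnad} starts from an arbitrary universal function $V$, and distinct choices of $V$ (or a standard padding argument) give infinitely many pairwise distinct halting sets.

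Concretely, the first step is to recall the construction: given a universal $V$, one defines $U(x)=V(y)$ when $x=y^2$ and $U(x)=0$ otherwise. Then $\text{Halt}(U)=\{y^2 : y\in\text{Halt}(V)\}\cup S$ where $S$ is the set of non-squares, which is generic and decidable in polynomial time. The second step is to note $\text{Halt}(U)$ is undecidable: deciding membership of $y^2$ in $\text{Halt}(U)$ is equivalent to deciding membership of $y$ in $\text{Halt}(V)$, and the latter is undecidable because $V$ is universal. The third step is to produce infinitely many such examples: one can either apply the construction to infinitely many genuinely different universal functions, or, more simply, iterate the squaring construction, or pad $V$ in a trivial way, to obtain an infinite family $\{U_n\}$ with pairwise distinct (indeed pairwise non-recursively-isomorphic, or at least distinct as sets) halting sets, each generic, almost decidable, and undecidable.

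There is no serious obstacle here; the corollary is an essentially immediate consequence of Proposition~\ref{genericnad} together with the elementary fact that a universal function has undecidable halting set. The only point requiring a sentence of care is the word ``infinitely many'': one must check that the family of halting sets obtained is genuinely infinite rather than a single set described in infinitely many ways. This is handled by the remark already made in the text that ``universal and programmable universal functions exist and can be effectively constructed'' in infinitely many essentially different ways, or by a one-line padding argument applied to the witness $U$ of Proposition~\ref{genericnad}. Hence the proof is just: take the infinitely many $U$ from Proposition~\ref{genericnad}; each $\text{Halt}(U)$ is almost decidable; none is decidable because $U$ is universal; done.
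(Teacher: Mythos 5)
Your proposal is correct and is essentially the paper's intended argument: the corollary is stated immediately after Proposition~\ref{genericnad} with no separate proof, precisely because it follows by noting that the halting sets constructed there are almost decidable yet undecidable (being halting sets of universal functions), with infinitude obtained by varying the starting universal function. Your extra sentence of care about ``infinitely many'' distinct sets (via padding or distinct choices of $V$) is a reasonable tightening of a point the paper leaves implicit.
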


Does there exist  a universal function $U$ whose halting set  is not almost  decidable? This problem was left open in \cite{HM}: here we answer it in the affirmative.

\begin{thm}
\label{notalmostdecid}
There exist infinitely many universal functions  whose halting sets are not negligible and not almost decidable.
\end{thm}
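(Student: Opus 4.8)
The plan is to start from an arbitrary universal function $V$, put $W=\text{Halt}(V)$, and spread the information contained in $W$ over a sequence of geometrically growing blocks; the point will be that the trace of the resulting halting set on \emph{any} density-one decidable set still computes $W$.

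Concretely, partition $\{2,3,4,\dots\}$ into the blocks $B_n=\{2^n,2^n+1,\dots,2^{n+1}-1\}$, $n\ge 1$, so that $\#B_n=2^n$ and $B_n\subseteq\{1,\dots,2^{n+1}\}$, and define the partially computable function $U$ by $U(x)=V(\lfloor\log_2 x\rfloor)$ for $x\ge 2$ (and $U(1)=0$, say). Then $\text{Halt}(U)\cap B_n$ equals $B_n$ when $n\in W$ and is empty otherwise, i.e.\ $\text{Halt}(U)=\bigcup_{n\in W}B_n$. Since $U(2^n)=V(n)$, the function $U$ is universal: if $C_V$ witnesses universality of $V$, then $C_U(g,x)=2^{C_V(g,x)}$ is computable and $U(C_U(g,x))=V(C_V(g,x))$. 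Moreover, for each $n\in W$ one has $p_{2^{n+1}}(\text{Halt}(U))\ge\#B_n/2^{n+1}=1/2$, and $W$ is infinite, so $\overline d(\text{Halt}(U))\ge 1/2>0$ and $\text{Halt}(U)$ is not negligible.

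The heart of the matter is non-almost-decidability. Suppose, for a contradiction, that $R$ is a generic decidable set with $R\cap\text{Halt}(U)$ decidable. Genericity means $\#(\{1,\dots,N\}\setminus R)=o(N)$, so for all large $n$, $\#(B_n\setminus R)\le\#(\{1,\dots,2^{n+1}\}\setminus R)<2^n=\#B_n$, hence $B_n\cap R\ne\emptyset$. As $R$ is decidable, the map $n\mapsto x_n:=\min(B_n\cap R)$ is computable for $n\ge n_0$ (some fixed $n_0$). Since $x_n\in B_n$, we have $x_n\in\text{Halt}(U)$ iff $n\in W$; since also $x_n\in R$, we get $x_n\in R\cap\text{Halt}(U)$ iff $n\in W$. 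Decidability of $R\cap\text{Halt}(U)$ together with computability of $n\mapsto x_n$ would then make $W$ decidable (hard-coding the finitely many $n<n_0$), contradicting undecidability of the halting set of the universal function $V$. Hence $\text{Halt}(U)$ is not almost decidable. Finally, repeating the construction with $B_n$ replaced by $\{j^n,\dots,j^{n+1}-1\}$ for each integer $j\ge 2$ yields pairwise distinct universal functions — their halting sets have least elements $j^{\min W}$, which are distinct — each with a non-negligible ($\overline d\ge 1/2$), non-almost-decidable halting set, giving infinitely many of them.

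The step I expect to be the main obstacle is calibrating the block growth against the density-one hypothesis on $R$: the blocks must grow fast enough that, for large $n$, the $o(N)$ shortfall of $R$ up to the end of $B_n$ is smaller than the single block size $\#B_n$ (so that $B_n$ is guaranteed to meet $R$ and a representative can be computed), while keeping the block endpoints computable so that $n\mapsto x_n$ is an honest reduction of $W$. Geometric blocks with $\#B_n=2^n$ inside $\{1,\dots,2^{n+1}\}$ satisfy both demands at once; bounded, or even polynomially sized, blocks would fail, since a density-one set may omit a whole such block for infinitely many $n$. The remaining verifications (that $U$ is partially computable, that $C_U$ is computable, the density estimates) are routine.
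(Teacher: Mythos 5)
Your argument is correct, and it rests on the same underlying idea as the paper's proof, implemented with a different decomposition. The paper sets $U(x)=V(\varphi(x))$ where $\varphi$ is one plus the dyadic valuation, so each fibre $\varphi^{-1}(n)$ has \emph{positive} density $2^{-n}$; hence every generic set $S$ meets \emph{every} fibre, the selector $\theta(n)=\min\{k\in S\mid\varphi(k)=n\}$ is total computable, and one obtains a uniform reduction of $\text{Halt}(V)$ to $S\cap\text{Halt}(U)$, together with the fact that $d(\text{Halt}(U))$ exists and is positive. You instead copy $V(n)$ onto the finite block $B_n=\{2^{n},\dots,2^{n+1}-1\}$: each block has density zero, but it occupies half of $\{1,\dots,2^{n+1}\}$, so a density-one decidable $R$ must meet all but finitely many blocks. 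The price of this choice is mild: your reduction is non-uniform (the finitely many exceptional $n<n_0$ are hard-coded, which is perfectly fine for deciding $\text{Halt}(V)$ and yields the desired contradiction), and you only control the upper density of $\text{Halt}(U)$ rather than the density itself, which still suffices for ``not negligible''. Both constructions deliver the essential mechanism---a computable choice of representatives, inside any generic decidable set, on which membership in $\text{Halt}(U)$ codes membership in $\text{Halt}(V)$---so your proof is sound, and arguably the block picture makes the density bookkeeping more transparent, while the paper's fibre picture gives a cleaner, fully uniform reduction.

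Two cosmetic points to fix. Since $0\notin\Z$, ``$U(1)=0$'' should be $U(1)=1$ or $U$ left undefined at $1$ (and, for base $j>2$, on $\{1,\dots,j-1\}$). Moreover, if you do define $U_j$ at those small points, then $1\in\text{Halt}(U_j)$ for every $j$ and your distinctness argument via the least element $j^{\min W}$ fails as stated; leave $U_j$ undefined below $j^{1}$ (then the least element really is $j^{\min W}$), or get infinitude as the paper does, by varying $V$.
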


\begin{proof}
We start with an arbitrary  universal function   $V$ and construct a new  universal function $U$ whose halting set $\text{Halt}(U)$ is not almost decidable.\\

First we  define the computable function $\varphi\colon \mathbb{Z}^{+}\longrightarrow\mathbb{Z}^{+}$
by $
\varphi (n)=\max\{ k\in\Z \mid 2^{k-1} \mbox{ divides } n\}.$
\if01
Here are the
first values of $\varphi$:
\bigskip

\noindent \begin{center}
\begin{tabular}{|c|c|c|c|c|c|c|c|c|c|c|}
\hline 
$x$ & 1 & 2 & 3 & 4 & 5 & 6 & 7 & 8 & 9 & 10  \tabularnewline 
\hline 
$\varphi\left(x\right)$  & 1 & 2 & 1 & 3 & 1 & 2 & 1 & 4 & 1 & 2 \tabularnewline
\hline 
\hline
$x$ & 11 & 12 & 13 & 14 & 15 & 16 & 17 & 18 & 19 & 20\tabularnewline 
\hline 
$\varphi\left(x\right)$  & 1 & 3 & 1 & 2 & 1 & 5 & 1 & 2 & 1 & 3\tabularnewline
\hline 
\end{tabular}
\par\end{center}
\fi
\bigskip 

The function $\varphi$ has the following properties:\\[-2ex]

\begin{enumerate}
\item[(a)] $\varphi(2^{m-1}(2k+1))=m$, for every $m,k\in\Z$, so  $\varphi$ outputs every positive integer infinitely many times.
\item[(b)] $\varphi^{-1}(n) = \{ k\in\Z\mid 2^{n-1} \text { divides } k \text{ but } 2^{n}
\text{ does not divide } k\}$.
\item[(c)] $d(\varphi^{-1}(n))=2^{-n}$, for all $n\in\Z$.
\item[(d)] If $S\subseteq
\Z$  and $d(S)=1$, then for every $n\in\mathbb{Z}^{+}$,
$\varphi^{-1}\left(n\right)\cap S\neq\emptyset$.\\[-2ex]
\end{enumerate}

For (d) we note that  if $\varphi^{-1}\left(n\right)\cap S=\emptyset$, then $d\left(S\right)\leq1-2^{-n}$, a contradiction. 

\medskip


Next we  define $U(x)=V(\varphi (x))$ and prove that $U$ is  universal. We consider the partially computable function
$C_{U}(z,x) = \inf\{s\in \Z \mid \varphi(s) = C_{V}(z,x)\}$ and note that: 1) by (a), $\dom (C_{U}) = \dom (C_{V})$, and 2) 
$\varphi (C_{U}(z,x))= C_{V}(z,x)$, for all $(z,x)\in  \dom (C_{V})$.  Consequently, for every partially computable function
$F\colon\mathbb{Z}^{+}\longrightarrow\overline{\mathbb{Z}^{+}}$ we have $F(x) = V(C_{V}(g_{V,F},x)) =
V(\varphi(C_{U}(g_{V,F},x)))$, so $g_{U,F}=g_{V,F}$.

\medskip

Let us assume by absurdity that there exists a generic decidable set $S\subseteq \Z $ such that $S\cap \text{Halt}(U)$ is decidable.

\medskip

Define the partial function $\theta \colon \mathbb{Z}^{+}\longrightarrow\overline{\mathbb{Z}^{+}}$ by
$\theta(n) = \inf\{k\in S\mid \varphi\left(k\right)=n\}.$
\medskip

 As $S$ is decidable,  $\theta$ is partially computable;  by (a)  ($\varphi$ is surjective) and by (d) (as $d(S)=1$, for all $n\in\Z$, $
\varphi^{-1}(n) \cap S \not=\emptyset$)
it follows that $\theta$ is computable. Furthermore, the computable function $\theta$ has the following two properties:
 for all $n\in\Z$, $\varphi(\theta(n))=n$ and $\theta(n)\in S$.
  
 \medskip
We next prove that for all $n\in\Z$,

\begin{equation}
\label{hequiv}
n \in \text{Halt}(V) \, \text{   iff } \, \theta(n) \in  S\cap \text{Halt}(U).
\end{equation}

Indeed,
\begin{eqnarray*}
n \in \text{Halt}(V) &\Longleftrightarrow &  V(n)<\infty\\
                     &\Longleftrightarrow  & V(\varphi(\theta(n)))<\infty \,\,\,\,\,\,\,\,\,\,\,\,\,\,\,\,\,\,\,\,\,\,\,\,\,\,\,\,(\varphi(\theta(n)) = n)\\
                     &\Longleftrightarrow  & U(\theta (n))<\infty\, \, \,\,\,\,\,\,\,\,\,\,\,\,\,\,\,\,\,\,\,\,\,\,\,\,\,\,\,\,\,\,\,\,\, (\text{definition of  } U)\\
                     &\Longleftrightarrow  & \theta (n) \in \text{Halt}(U)\\
                    & \Longleftrightarrow  & \theta (n) \in S\cap \text{Halt}(U). \,\,\,\,\,\,\,\,\,\,\,\,\,\,\,\,\,\, (\theta (n) \in S)
\end{eqnarray*}

From  (\ref{hequiv}) it follows that  $\text{Halt}(V)$ is decidable because $S\cap \text{Halt}(U)$ is decidable, a contradiction.

\medskip

Finally, $d(\text{Halt}(U))>0$ because $\text{Halt}(U)=
\varphi^{-1}(\text{Halt}(V))$.

 By varying the universal function $V$ we get infinitely many examples of universal functions $U$.
\end{proof}

\begin{cor}There exist infinitely many  universal  functions $U$ such that for any generic computably
enumerable set $S\subseteq\mathbb{Z}^{+}$, $S\cap \text{Halt}\left(U\right)$
is not decidable.
\end{cor}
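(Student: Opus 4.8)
The plan is to mimic the proof of Theorem~\ref{notalmostdecid}, keeping the same family of universal functions $U(x)=V(\varphi(x))$ (for an arbitrary universal $V$), and to observe that the only use of the decidability of the witnessing set in that argument was to make the selection function $\theta$ computable --- a feature that survives when the witness is merely computably enumerable. Since every decidable set is computably enumerable, this also re-derives, for these $U$, the ``not almost decidable'' conclusion of Theorem~\ref{notalmostdecid}.

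First I would assume, towards a contradiction, that $S\subseteq\Z$ is a generic computably enumerable set for which $S\cap\text{Halt}(U)$ is decidable, and fix a computable enumeration $s_{0},s_{1},s_{2},\dots$ of $S$. In place of the infimum $\inf\{k\in S\mid\varphi(k)=n\}$ used in the proof of Theorem~\ref{notalmostdecid} (which need not be computable from a computably enumerable presentation of $S$), I would define $\theta(n)$ to be the first term $s_{i}$ of the enumeration with $\varphi(s_{i})=n$; here $\varphi^{-1}(n)$ is decidable, so the test $\varphi(s_{i})=n$ is effective. Property~(d), whose proof needs only $d(S)=1$, guarantees $\varphi^{-1}(n)\cap S\neq\emptyset$ for every $n$, so this search always halts and $\theta\colon\Z\longrightarrow\Z$ is total computable, satisfying $\varphi(\theta(n))=n$ and $\theta(n)\in S$. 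These are exactly the two properties of $\theta$ invoked in the chain of equivalences establishing (\ref{hequiv}), so the identical computation gives $n\in\text{Halt}(V)$ iff $\theta(n)\in S\cap\text{Halt}(U)$ for all $n\in\Z$. As $\theta$ is computable and $S\cap\text{Halt}(U)$ is decidable, this makes $\text{Halt}(V)$ decidable, contradicting the universality of $V$; varying $V$ yields infinitely many such $U$.

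The point on which the argument turns --- and essentially the only thing that needs to be noticed --- is that the proof of Theorem~\ref{notalmostdecid} never uses minimality of $\theta(n)$, only that $\theta$ is some computable uniformization of the relation ``$k\in S$ and $\varphi(k)=n$''. Once this is seen, the passage from decidable to computably enumerable $S$ is automatic: $\varphi^{-1}(n)\cap S$ is computably enumerable and, by genericity together with~(d), nonempty, hence contains an element computable uniformly in $n$, located by dovetailing through the given enumeration of $S$. No hypothesis on $S$ beyond $d(S)=1$ is required, so I do not expect any genuine obstacle here.
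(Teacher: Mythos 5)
Your proposal is correct and matches the paper's own argument: the paper likewise keeps $U(x)=V(\varphi(x))$ and replaces $\theta$ by $\Gamma(n)=E(\min\{k\mid \varphi(E(k))=n\})$, where $E$ is a computable enumeration of the c.e.\ set $S$, relying exactly on the observation that only $\varphi(\theta(n))=n$ and $\theta(n)\in S$ (not minimality) are needed for the reduction of $\text{Halt}(V)$ to $S\cap\text{Halt}(U)$. No substantive difference.
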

\begin{proof}
Assume $S$ is  computable enumerable and $d(S)=1$. If   replace
the computable function $\theta$ with the computable function $\Gamma (n) = E(\min\{k\in\Z\mid \varphi(E(i))=n\})$, where
$E \colon \mathbb{Z}^{+}\longrightarrow\overline{\mathbb{Z}^{+}}$ is a computable injective function such that
$E(\Z)=S$ ($S$ is infinite) in the proof of Theorem~\ref{notalmostdecid}, then  we prove that $S\cap\text{Halt}\left(U\right)$
is not decidable.
\end{proof}

There are six possible relations between the notions of negligible, generic and  almost  decidable sets.
The above results looked at three of them: here we show that the remaining three possibilities can be realised too.
First, it is clear that there exist non-negligible and decidable sets, hence non-negligible and almost decidable sets.

\medskip

The next result is a stronger form of Theorem~\ref{notalmostdecid}: its proof depends on a set $A$ and works
for other interesting sets as well.\medskip

\begin{thm}
\label{gnad}
There exist infinitely many universal functions whose halting sets are 
generic and not almost decidable.
\end{thm}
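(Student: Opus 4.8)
The plan is to combine the construction from Theorem~\ref{notalmostdecid} with the density-boosting trick from Proposition~\ref{genericnad}, so that the resulting universal function has a generic halting set while retaining the "not almost decidable" property. Concretely, I would start again from an arbitrary universal function $V$, first form $W(x) = V(\varphi(x))$ as in Theorem~\ref{notalmostdecid} so that $\text{Halt}(W) = \varphi^{-1}(\text{Halt}(V))$ and $W$ is not almost decidable, and then dilute $W$ onto a negligible set of inputs: define $U$ by $U(x) = W(y)$ if $x = y^{2}$ for some $y \in \Z$, and $U(x) = 0$ otherwise. As in Proposition~\ref{genericnad}, $U$ is universal (a G\"odel number for $F$ under $W$ becomes one under $U$ by squaring the relevant index through the coding function), and $\text{Halt}(U) = \{x^{2} \mid x \in \text{Halt}(W)\} \cup \{x \in \Z \mid x \text{ is not a perfect square}\}$, which is generic because the non-squares already have density one.

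The heart of the argument is to show $U$ is still not almost decidable. Suppose $R \subseteq \Z$ is generic and decidable with $R \cap \text{Halt}(U)$ decidable. The set $A = \{y \in \Z \mid y^{2} \in R\}$ need not be generic, so the key point is to argue that it is nonetheless "large enough" — specifically, that $A$ still meets every fibre $\varphi^{-1}(n)$. This is where the remark preceding the theorem ("its proof depends on a set $A$ and works for other interesting sets as well") suggests the right abstraction: what property (d) of $\varphi$ really needs is not that $S$ has density one, but that $S$ intersects each $\varphi^{-1}(n)$; and one shows that a generic $R$ forces $A = \{y \mid y^{2} \in R\}$ to have this intersection property. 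Indeed, since $R$ is generic, its complement has density zero, and I would estimate the density of the set of squares lying outside $R$: if $\varphi^{-1}(n) \cap A = \emptyset$ then $y^{2} \notin R$ for all $y$ with $2^{n-1} \mid y$ but $2^{n} \nmid y$, and counting such $y$ up to $\sqrt{N}$ shows the complement of $R$ contains $\gtrsim \sqrt{N}/2^{n}$ squares below $N$ — this is too sparse to directly contradict $d(\Z \setminus R) = 0$, so a more careful choice is needed.

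Because that naive square-dilution does not obviously work, the better route — and I expect this to be the main obstacle — is to choose the dilution so that the fibres $\varphi^{-1}(n)$ are spread across a positive-density set of inputs while the overall halting set is still generic. One clean way: instead of placing $W$ on the squares, reindex so that $U$ agrees with $W$ on a decidable set $D$ of density zero whose image under the relevant fibre structure is controlled, but arrange $\text{Halt}(U)$ generic by letting $U$ converge (output $0$) off $D$. The delicate balance is that "not almost decidable" needs $D$ to still "see" all of $\varphi^{-1}(n)$ in a way that lets the computable retraction $\theta$ be reconstructed, while "generic" needs $D$ negligible. Resolving this tension is exactly what the phrase "depends on a set $A$" points to: pick $A$ to be a negligible decidable set that nonetheless has positive *upper* relative density inside each $\varphi^{-1}(n)$ — for instance $A = \bigcup_n \{2^{n-1}(2k+1) \mid 2^{n-1}(2k+1) \in [t_n, 2t_n]\}$ for a fast-growing sequence $t_n$ — so that a generic decidable $R$ must still meet each $\varphi^{-1}(n) \cap A$, giving a computable $\theta$ with $\varphi(\theta(n)) = n$ and $\theta(n) \in R \cap D$, whence $\text{Halt}(V)$ becomes decidable via the same equivalence $n \in \text{Halt}(V) \iff \theta(n) \in R \cap \text{Halt}(U)$ as in (\ref{hequiv}) — a contradiction. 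Once the set $A$ is pinned down, the verification of universality, genericity of $\text{Halt}(U)$, and the contradiction are all routine adaptations of the two preceding proofs; varying $V$ yields infinitely many examples.
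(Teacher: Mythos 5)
There is a genuine gap, and it is fatal to both versions of your construction. In your first attempt you set $U(x)=W(y)$ when $x=y^{2}$ and $U(x)=0$ otherwise; but then every non-square lies in $\text{Halt}(U)$, so the set $R$ of non-squares is itself a generic decidable set with $R\cap\text{Halt}(U)=R$ decidable, and $\text{Halt}(U)$ is almost decidable outright --- no analysis of $\{y\mid y^{2}\in R\}$ or of the fibres $\varphi^{-1}(n)$ is needed to see the failure. Your ``better route'' inherits exactly the same defect: any construction in which $U$ converges (e.g.\ outputs $0$) everywhere outside a negligible decidable set $D$ makes $\overline{D}$ a generic decidable witness of almost decidability. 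Moreover, the key claim on which your repair rests --- that one can choose a negligible decidable $A$ so that every generic decidable $R$ meets each $\varphi^{-1}(n)\cap A$ --- is false: $R=\Z\setminus A$ is generic, decidable, and disjoint from $A$. The underlying obstruction is that the dilution trick of Proposition~\ref{genericnad} and the $\varphi$-construction of Theorem~\ref{notalmostdecid} pull in opposite directions, and no purely elementary density bookkeeping can reconcile them, because the part of the halting set living off the coding region must itself be computationally complicated, not trivially convergent.

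This is precisely where the paper's proof brings in an ingredient your proposal lacks: a computably enumerable \emph{generic} set $A$ having no generic decidable subset (Theorem~2.22 of \cite{JS2012}). Writing $A=\text{Halt}(F)$, the paper defines $U(x)=V(y)$ if $x=y^{2}$ and $U(x)=F(x)$ otherwise; universality comes from $V$ on the squares, genericity of $\text{Halt}(U)$ comes from genericity of $A$ on the non-squares, and if $S$ were a generic decidable set with $S\cap\text{Halt}(U)$ decidable, then $S'=S\cap\overline{P}$ ($P$ the squares) would make $\text{Halt}(F)\cap S'$ a generic decidable subset of $A$, contradicting the defining property of $A$. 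Your sketch never produces (or cites) such a set, and the combinatorial content it carries cannot be recovered from properties (a)--(d) of $\varphi$ together with density-one intersections; if you want to salvage your outline, replace the ``output $0$ off the coding set'' filler by a partial function whose halting set is a generic c.e.\ set with no generic decidable subset, which is exactly the paper's construction.
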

\begin{proof} We 
use a computably enumerable generic set $A$ which has no generic
decidable subset (see Theorem~2.22 in \cite{JS2012}) to construct a  universal function as in the statement above.

Assume  $A=  \text{Halt}(F)$ for some  partially computable function $F$. Let $V$ be
an arbitrary  universal function and  define $U$ by:

\[U(x)  = \left\{ \begin{array}{ll}
V(y), & \mbox{\rm if $x=y^{2}$, for some $y\in\Z$}, \\
F(x), & \mbox{\rm otherwise} \,.
  \end{array} \right.\]
  
 Clearly  $\text{Halt}(U)$ is universal and  generic.

 For the sake of a contradiction assume that $\text{Halt}(U)$ is almost decidable by $S$, i.e.\ $S$ is a generic
 decidable set such that $\text{Halt}(U) \cap S$ is decidable.
 
 We now prove that $\text{Halt}(F)$ is almost decidable by $S'=S\cap \overline{P}$, where $P$ is the set of square positive
 integers (note that $P$ is decidable and negligible) and $\overline{P}$ is the complement of $P$. It is clear that $S'$ is generic and decidable,  so we need only to show that
 $\text{Halt(F)} \cap S^{'} = \text{Halt(F)} \cap  S \cap \overline{P}$ is decidable. 
 
 We note that $\text{Halt}(U)$ is a disjoint union of the sets  $\{x\in\Z \mid x=y^{2}, \text{ for some } y\in \text{Halt}(V)\}$ and
$\text{Halt}(F) \cap  \overline{P}$, and the first set is a subset of $P$. To test whether $x$ is in $\text{Halt(F)} \cap  S^{'}$ we proceed as follows:
a) if $x\in P$, then $x\not\in \text{Halt}(F) \cap S^{'}$, b) if $x\not \in P$, then $x\in \text{Halt(F)} \cap  S^{'}$ iff $x\in \text{Halt}(U)\cap S$.
Hence, $\text{Halt(F)} \cap S^{'}$ is decidable because $\text{Halt}(U) \cap S$ is decidable, so $\text{Halt}(U)$ is almost
decidable.
 
 We have obtained a contradiction because $\text{Halt(F)} \cap S^{'}$ is a generic decidable subset of $A$, hence $\text{Halt}(U)$ is not almost decidable.
\end{proof}


Let $r \in (0,1]$. We say that a set $S\subset \Z$ is {\em $r$-decidable}  if  there exists a  decidable set  $R\subset \Z$  such that $d(R)=r$ and  $R \cap S$ is decidable;  a set $S\subset \Z$ is {\em weakly decidable} if  $S$ is  $r$-decidable for some $r \in (0,1)$.  With this terminology, generic sets coincide with  $1$-decidable sets.

\medskip

Theorem
3.18 of \cite{DJS2012}  states that there is a computably enumerable  generic set that has
no decidable subset of density in  $(0,1)$.
Using this set in the proof of Theorem~\ref{gnad} 
we get the following stronger result: 

\begin{thm}There exist infinitely many universal functions whose halting sets are 
generic and not weakly decidable.
\end{thm}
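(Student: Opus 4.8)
The plan is to run the proof of Theorem~\ref{gnad} essentially verbatim, but feeding into it the stronger set supplied by Theorem~3.18 of \cite{DJS2012}: a computably enumerable generic set $A$ with no decidable subset of density in $(0,1)$, in place of the set from \cite{JS2012} used there (which only lacked a \emph{generic} decidable subset). I would write $A = \text{Halt}(F)$ for some partially computable $F$, fix an arbitrary universal function $V$, and set $U(x) = V(y)$ when $x = y^{2}$ for some $y\in\Z$ and $U(x) = F(x)$ otherwise. The verification that $U$ is universal (via $C_{U}(z,x) = C_{V}(z,x)^{2}$, so that $U(C_{U}(z,x)) = V(C_{V}(z,x))$ and G\"odel numbers are inherited from $V$) and that $\text{Halt}(U)$ is generic (since $\text{Halt}(U) \supseteq \text{Halt}(F)\cap\overline{P}$, with $\text{Halt}(F) = A$ generic and the set $P$ of perfect squares negligible) is word for word the corresponding part of Theorem~\ref{gnad}.

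The heart of the argument is the same proof by contradiction. Suppose $\text{Halt}(U)$ is weakly decidable; then it is $r$-decidable for some $r\in(0,1)$, so there is a decidable set $S$ with $d(S) = r$ and $\text{Halt}(U)\cap S$ decidable. Passing to $S' = S\cap\overline{P}$ preserves decidability, and since $d(P)=0$ one still has $d(S') = r$. The case analysis from Theorem~\ref{gnad} — decide membership in $P$ first, and off $P$ use that $\text{Halt}(U)$ and $\text{Halt}(F)$ coincide — shows that $\text{Halt}(F)\cap S' = A\cap S'$ is decidable.

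What remains is to pin down the density of the decidable set $A\cap S'$. Here genericity of $A$ does the work: $d(\overline{A}) = 0$ forces $p_{N}(S'\setminus A)\to 0$, and together with $p_{N}(S')\to r$ this gives $d(A\cap S') = r \in (0,1)$. Thus $A$ has a decidable subset of density strictly between $0$ and $1$, contradicting the choice of $A$; hence $\text{Halt}(U)$ is not weakly decidable. Letting $V$ range over the infinitely many universal functions then produces infinitely many such $U$. I do not anticipate a real obstacle: the one genuinely new point beyond Theorem~\ref{gnad} is replacing the sentence ``this set is a generic decidable subset of $A$'' by the density computation $d(A\cap S') = r\in(0,1)$, and this is a short limit argument resting on $d(\overline{A})=0$; everything else is a transcription of the earlier proof with ``$r$-decidable, $r\in(0,1)$'' substituted for ``almost decidable''.
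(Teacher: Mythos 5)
Your proposal is correct and is essentially the paper's own argument: the paper proves this theorem precisely by rerunning the proof of Theorem~\ref{gnad} with the set from Theorem~3.18 of \cite{DJS2012}, exactly as you do. The only genuinely new detail, the computation $d(A\cap S')=r\in(0,1)$ from $d(\overline{A})=0$, is left implicit in the paper and you supply it correctly.
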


  A simple set is a  co-infinite computably enumerable   set  whose complement 
 has no decidable subset; the existence of a negligible simple set is shown in the proof of Proposition 2.15 in \cite{JS2012}.
If in the proof of Theorem~\ref{gnad} we use  a negligible simple set instead of the computably enumerable  generic set which has no generic
decidable subset we obtain the following result:

\begin{thm}There exist infinitely many universal functions whose halting sets are 
negligible and not almost decidable.
\end{thm}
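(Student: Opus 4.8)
The plan is to reuse verbatim the construction in the proof of Theorem~\ref{gnad}, changing only the choice of the auxiliary computably enumerable set. Recall that that proof took a c.e.\ set $A = \text{Halt}(F)$ with a prescribed combinatorial defect (no generic decidable subset), built
\[
U(x) = \begin{cases} V(y), & \text{if } x = y^2 \text{ for some } y\in\Z,\\ F(x), & \text{otherwise,}\end{cases}
\]
and derived a contradiction from the assumption of almost decidability by showing that any generic decidable $S$ almost-deciding $\text{Halt}(U)$ would, via $S' = S \cap \overline{P}$ (with $P$ the squares), produce a generic decidable subset of $A$. So first I would invoke the existence of a negligible simple set, as guaranteed by the proof of Proposition~2.15 in \cite{JS2012}: a co-infinite c.e.\ set $B$ whose complement $\overline B$ contains no infinite decidable set, with $d(B) = 0$. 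Write $A = B$, pick a partially computable $F$ with $\text{Halt}(F) = A$, and feed this $F$ into the construction above.

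Next I would check the two things that change relative to Theorem~\ref{gnad}. The density claim reverses: $\text{Halt}(U)$ is now the disjoint union of $\{y^2 : y \in \text{Halt}(V)\} \subseteq P$ and $\text{Halt}(F)\cap\overline P = B \cap \overline P$; the first piece is negligible since $P$ is, and the second is negligible since $B$ is, so $d(\text{Halt}(U)) = 0$, i.e.\ $\text{Halt}(U)$ is negligible. Universality of $U$ is unaffected — it is inherited from $V$ exactly as before, through the $y^2$ branch. Then I would run the same reduction: if $S$ is a generic decidable set with $\text{Halt}(U)\cap S$ decidable, then for $x \notin P$ we have $x \in \text{Halt}(F)\cap S \cap \overline P$ iff $x \in \text{Halt}(U)\cap S$, and for $x\in P$ the intersection with $S' = S\cap\overline P$ is empty, so $\text{Halt}(F)\cap S'$ is decidable. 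Hence $A \cap S'$ is an infinite (being generic) decidable set.

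The only genuinely new point is to turn "no infinite decidable subset of $\overline A$" into the contradiction we need, since here the defect of a simple set lives in the complement, not in the set itself. But the decidable set we produced, $A \cap S'$, being infinite and decidable and contained in $A$, means its complement $\overline{A \cap S'} \supseteq \overline A$ is co-infinite; more to the point, $A$ itself then has an infinite decidable subset. For a simple set $A$ this is not yet a contradiction — a simple set can certainly contain infinite decidable subsets. So the correct reading is that Theorem~\ref{gnad}'s proof should be applied with the \emph{complement} in mind: replace $A$ by $\overline B$? That fails since $\overline B$ is not c.e. The resolution I would actually use: the statement of the theorem only requires \emph{not almost decidable}, and the argument shows any generic decidable $S$ yields a generic (hence density-one, hence infinite) decidable $S'$ with $A \cap S'$ decidable; since $S'$ is decidable and $A \cap S'$ is decidable, $A \cap S'$ is a decidable subset of $A$ that is \emph{co-infinite in $S'$ only if} $\overline A \cap S'$ is infinite — and $\overline A \cap S'$ is a decidable subset of $\overline A$, infinite because $d(S') = 1$ while $d(A) = 0$ forces $d(\overline A \cap S') = 1$. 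That infinite decidable subset of $\overline A$ contradicts simplicity of $B = A$.

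So the main obstacle, and the step to write most carefully, is precisely this density bookkeeping: verifying that genericity of $S$ (equivalently $S'$) together with negligibility of $A$ forces $\overline A \cap S'$ to be infinite and decidable, which is exactly what the simplicity of $A$ forbids. Everything else — universality, negligibility of $\text{Halt}(U)$, and the case-split computing membership in $\text{Halt}(F)\cap S'$ — is a line-by-line transcription of the proof of Theorem~\ref{gnad}. Finally, varying $V$ over the infinitely many universal functions yields infinitely many such $U$, completing the proof.
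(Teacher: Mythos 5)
Your proposal is correct and follows essentially the same route as the paper, whose entire proof is the instruction to rerun the construction of Theorem~\ref{gnad} with a negligible simple set in place of the generic c.e.\ set without generic decidable subsets. The one point the paper leaves implicit---that the contradiction must come from the complement, namely that $\overline{A}\cap S' = S'\setminus(A\cap S')$ is decidable and of density one (hence infinite), violating simplicity---is exactly the step you identify and carry out correctly, after discarding your initial (and rightly retracted) attempt to contradict via a decidable subset of $A$ itself.
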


\section{A Simplicity Criterion for Universal Functions and Open Problems}

Universality is one of the most important concepts in computability theory.  However, not all universal machines are made equal. 
The most popular criterion for distinguishing between universal Turing machines   is  the number of states/symbols. Other three other criteria
 of simplicity for universal prefix-free Turing machines have been studied in \cite{cris-2010}. The property of almost decidability is another criterion of simplicity for  universal functions.

\medskip

The universal function $U$ constructed in the proof of Theorem~\ref{notalmostdecid} is {\em not} programmable  universal.
Theorems 2 and 8 in \cite{CNSS} show that the halting sets of programmable universal string functions
(plain or prefix-free) are never negligible. {\em Are there programmable  universal functions not almost decidable?}

\medskip

 The notion of almost decidability suggests the possibility of an approximate (probabilistic) solution for the halting problem
(see also \cite{CS2008,CD}). Assume that the halting set is $\text{Halt}(U)$ is almost decidable via the generic decidable set $S$
and we wish to test whether an arbitrary $x\in\Z$ is in $\text{Halt}(U)$. If $x\in S$, then $x\in \text{Halt}(U)$ iff $x\in S \cap \text{Halt}(U)$.
If $x\not\in S$, then we don't know whether $x\in \text{Halt}(U)$ or $x\not\in \text{Halt}(U)$ (the undecidability is located in $ \overline{S}\cap \text{Halt}(U)$). Should we conclude that $x\in \text{Halt}(U)$ or $x\not\in \text{Halt}(U)$?  Density does not help because
$d(\overline{S} \cap \text{Halt}(U))= d(\overline{S} \cap \overline{\text{Halt}(U)})=0$. 
 It is an open problem
to find a solution.


\medskip 

 The notion of almost decidability  can be refined in different ways, e.g.\ by looking at the computational complexity of the decidable sets appearing in Theorem~\ref{notalmostdecid}. Also, it will be interesting to study the property of {\em almost decidability} topologically or for other densities.
 
\section*{Acknowledgement} The authors have been supported in part by the Quantum Computing Research Initiatives at Lockheed Martin.
We thank  Laurent Bienvenu, Ludwig Staiger, Mike Stay  and  Declan Thompson for interesting discussions on the topic of this paper.
We also thank the anonymous referees for comments which improved the paper.

\end{document}